\newcommand{\Bbf}{\mathbf{B}}
\newcommand{\E}{\mathbb{E}}
\newcommand{\V}{\mathbf{V}}
\newcommand{\U}{\mathbf{U}}
\newcommand{\Prb}{\mathbb{P}}
\newcommand{\R}{\mathbb{R}}
\newcommand{\N}{\mathbb{N}}
\newcommand{\Z}{\mathbb{Z}}
\newcommand{\sumin}{\sum_{i=1}^n}
\newcommand{\diag}{\text{diag}}
\newcommand{\diam}{\text{diam}}
\newtheorem{lem}{Lemma}
\newtheorem{theorem}{Theorem}
\newtheorem{assum}{Assumption}
\title{A Uniform Bound on the Operator Norm\\of Sub-Gaussian Random Matrices 
and Its Applications\footnote{We appreciate valuable comments and suggestions from Victor Chernozhukov, Guido 
Kuersteiner (Co-editor), three anonymous referees, and the participants of the 
conference 
on econometrics celebrating Peter Phillips' 40 years at Yale.}}
\date{\today}
\author{\setcounter{footnote}{2}
	Grigory Franguridi\footnote{
		Department of Economics, University of Southern California.
	}
\and
	Hyungsik Roger Moon\footnote{
		Department of Economics, University of Southern California and Yonsei University.
	}
}
\begin{document}
\maketitle

\begin{abstract}
    For an $N \times T$ random matrix $X(\beta)$ with weakly dependent uniformly sub-Gaussian entries $x_{it}(\beta)$ that may depend on a possibly infinite-dimensional parameter $\beta\in \mathbf{B}$, we obtain a uniform bound on its operator norm of the form $\E \sup_{\beta \in \mathbf{B}} ||X(\beta)|| \leq CK \left(\sqrt{\max(N,T)} + \gamma_2(\mathbf{B},d_\mathbf{B})\right)$, where $C$ is an absolute constant, $K$ controls the tail behavior of (the increments of) $x_{it}(\cdot)$, and $\gamma_2(\mathbf{B},d_\mathbf{B})$ is Talagrand's functional, a measure of multi-scale complexity of the metric space $(\Bbf,d_\Bbf)$. We illustrate how this result may be used for estimation that seeks to minimize the operator norm of moment conditions as well as for estimation of the maximal number of factors with functional data.
\end{abstract}

\noindent \textbf{Keywords} Random Matrix Theory, Operator Norm, Uniform Bound, 
Operator Norm Minimizing Estimator,  Functional Factor Models.

\onehalfspacing

\section{Introduction}

	Since its introduction in nuclear physics \citep{Wigner1955} and mathematical statistics \citep{Wishart1928}, random matrix theory has been developed to understand the properties of the spectra of large dimensional random matrices generated by various distributions. These include the asymptotic theory of the empirical distribution of the eigenvalues of large dimensional random matrices and bounds on the extreme eigenvalues. For detailed results on these topics, readers can refer to recent surveys like \citet{Bai2008}, \citet{EdelmanRao2005}, \citet{BaiSilverstein2010}, and \citet{Tao2012}, among others.

	In random matrix theory the study of the asymptotics of the largest eigenvalue of large dimensional random matrices goes back to \citet{Geman1980}. 	
	Suppose that $X$ is an $N \times T$ matrix consisting of random variables $x_{it}$. Many researchers have derived the limit of the largest eigenvalue of the sample covariance matrix, $\lambda_1(X'X)$\footnote{$X'$ denotes the transpose of $X$.}, under various distributional assumptions on the random matrix $X$. For example, when $X_{it}$ are iid $N(0,1)$ and $\kappa := \lim \frac{N}{T}$, \citet{Geman1980} showed that $\frac{1}{N} \lambda_1(X'X) \rightarrow_{a.s.} (1 + \kappa^{1/2})^2$. \citet{Johnstone2001} obtained a stronger result that the properly normalized largest eigenvalue, $\frac{\lambda_1(X'X) - \mu_{NT}}{\sigma_{NT}}$ with $\mu_{NT} = (\sqrt{N-1} + \sqrt{T} )^2$ and $\sigma_{NT} = (\sqrt{N-1} + \sqrt{T} )(1/\sqrt{N-1} + 1/\sqrt{T})^{1/3}$, converges to the Tracy--Widom law; this has been later shown to hold under more general distributional assumptions by \citet{khorunzhiy2012high} and \citet{tao2011random}, among many others.
	
	The aforementioned results imply that $\lambda_1(X'X)$ is stochastically bounded\footnote{A sequence of random variables $\xi_n$ is said to be \emph{stochastically bounded} or order $a_n$, $\xi_n=O_p(a_n)$, if for any $\varepsilon>0$ there exists $M>0$ such that $\Prb(|\xi_n/a_n|\geq M) \leq \varepsilon$ for all large enough values of $n$.} of order $\max(N,T)$, or equivalently, the operator norm $\| X \| := \sqrt{\lambda_1(X'X)}$ is stochastically bounded of order $\sqrt{\max(N,T)}$. In fact, such bound does not require that the underlying distribution is Gaussian and can be derived under much weaker conditions. For example, \citet{Latala2005} showed that the bound holds if $x_{it}$ are independent across $(i,t)$ with mean zero and uniformly bounded fourth moments. \citet{MoonWeidner2017} extended this result for the cases where $x_{it}$ are weakly correlated across $i$ or $t$. Other papers that have established similar bounds on $E\| X \| $ include \citet{Bandeira2016}, \citet{Guedonetal2017} and \citet{latala2018dimension}. 
	
	In the case where $X$ consists of independent sub-Gaussian entries, the $\sqrt{\max(N,T)}$ order for the operator norm may be obtained using a powerful way of bounding sub-Gaussian stochastic processes called \emph{generic chaining}, which was developed in \citet{fernique1975regularite} and advanced later by M. Talagrand in a series of papers. Indeed, note that $||X||=\max_{u\in\U}\max_{v\in\V} u'Xv$, where maxima are taken over the unit spheres $\U \subset \R^N$ and $\V\subset \R^T$, respectively. The process $Z(u,v)=u'Xv$ defined on $\U\times \V$ can be shown to be sub-Gaussian and so we can invoke generic chaining to get the bound for its expected maximum in terms of a certain measure of metric complexity of $\U\times \V$ called \emph{Talagrand's functional} $\gamma_2(\U\times\V)$ (see definition in the next section). It turns out that $\gamma_2(\U\times\V)$ has exact order $\sqrt{\max(N,T)}$.
	
	In this paper we extend existing nonasymptotic bounds on the operator norm of a high-dimensional random matrix to the case of elements that are allowed to be weakly dependent and to be functions of a possibly infinite-dimensional parameter. Specifically, let $x_{it}(\beta)$ be weakly dependent over $t$, sub-Gaussian stochastic processes indexed by parameter $\beta$ belonging to a (pseudo-)metric space $(\Bbf,d_\Bbf)$. Let $X(\beta)$ be the $N \times T$ matrix consisting of $x_{it}(\beta)$ and let $\gamma_2(\Bbf,d_\Bbf)$ be  Talagrand's functional of $\Bbf$ w.r.t. $d_{\Bbf}$ . Our main contribution is to show that $\E \sup_{\beta \in \mathbf{B}} \| X(\beta) \| $ is of order $\sqrt{\max(N,T)}+ \gamma_2(\Bbf,d_\Bbf)$.
	
We illustrate usefulness of this uniform bound with two examples. In one, we 
propose and show consistency of a new estimator that minimizes the operator 
norm of a matrix that consists of moment functions. In the other, we consider 
the generalization of the standard factor model to the case of functional data 
and suggest a new estimator of the maximal number of factors. 
	
The paper is organized as follows. Section \ref{section.theory} introduces our uniform bound along with the techniques necessary for its derivation. Section \ref{section.application} contains two applications of our theoretical result.  Finally, Section \ref{section.conclusion} concludes the paper. The appendix contains two technical proofs of the results in the main text. 

Throughout the paper, $C$ will denote a universal positive constant that may not be the same at each occurrence, but may never depend on sample sizes, dimensions or any other features of the modeling framework.

\section{Uniform bound on the operator norm}\label{section.theory}

\subsection{Generic chaining bound}

Our main result is based on the general bound on suprema of sub-Gaussian processes called the \emph{generic chaining bound}. We discuss this classic technique in this section and provide a proof in the appendix for completeness.

First, we need the following definitions. 
The \emph{$\psi$-Orlicz norm} of a random variable $Y$ is defined as
\begin{align*}
    ||Y||_{\psi} = \inf\left\{K>0 \text{ s.t. } \E\left(\psi(Y/K)\right) \leq 1\right\},
\end{align*}
where $\psi:\R_{+}\to \R_{+}$ is a convex function satisfying $\lim_{x\to \infty} \psi(x)/x = \infty$ and $\lim_{x\to 0} \psi(x)/x = 0$, and the convention that the infimum of an empty set is $+\infty$. In this paper, we let $\psi=\psi_2$, where $\psi_2(x)=\exp(x^2)-1$, and call $||\cdot||_{\psi_2}$ just ``the Orlicz norm''. A random variable with finite ($\psi_2$-)Orlicz norm is called \emph{sub-Gaussian}.

Intuitively, the Orlicz norm quantifies the decay speed for the tails of the 
distribution of $Y$. In fact, $||Y||_{\psi_2} \leq K$ is equivalent 
to\footnote{See e.g. \citet{vershynin2018high}, Proposition 2.5.2.}
\[
\Prb(|Y|\leq t) \geq 1-2e^{-t^2/K^2} \text{ for all } t\geq 0.
    \]
Hence, for example, Gaussian distributions and distributions with bounded support are all sub-Gaussian. 

Note also that the last inequality implies
\begin{align}\label{E:E|Y|<K}
\E |Y| = \int_0^\infty\left( 1-\Prb(|Y|\leq t)\right) dt \leq 2\int_0^\infty e^{-t^2/K^2}dt = K\sqrt{\pi}.    
\end{align}

Now let $T$ be a set and $d$ be a (pseudo-)metric on this set such that $(T,d)$ is a (pseudo-)metric space\footnote{Throughout the paper, ``metric'' can be replaced by a less restrictive notion of ``pseudometric'', a distinction we omit from now on.}. Consider a zero mean stochastic process $(Z_t)$ indexed by the elements of $T$. The process $(Z_t)$ is said to have \emph{sub-Gaussian increments} if there exists a constant $K>0$ such that
\begin{equation}\label{E:subgauss_increments}
||Z_t-Z_s||_{\psi_2} \leq K\cdot d(t,s) \text{ for all } t,s\in T.
\end{equation}

It has long been understood that behavior of sub-Gaussian processes is 
intimately connected to the metric complexity of its index set. In particular, 
the conventional bound on the expected supremum of $(Z_t)$ (see e.g. 
\citet{van1996weak} Corollary 2.2.8.) is
\begin{equation}\label{E:conventional_chaining}
\E \sup_{t\in T} Z_t \leq CK \int_0^\infty \sqrt{\log N(T,d,\varepsilon)}\, d\varepsilon,
\end{equation}
where $N(T,d,\varepsilon)$ is the \emph{covering number} of $(T,d)$ (i.e. the minimal number of $\varepsilon$-balls that is sufficient to cover $T$ in metric $d$) and $C$ is an absolute constant. The integral on the right hand side is sometimes called \emph{Dudley's entropy} of $(T,d)$ and quantifies complexity of $(T,d)$ across multiple scales.

It turns out, however, that Dudley's entropy bound is not optimal, even for 
Gaussian processes. In fact, the entropy may be infinite when the expected 
supremum is not, rendering the bound uninformative\footnote{For an illustrative 
example, see Exercise 8.1.12 in \citet{vershynin2018high}.}.


This led to the development of more precise ways to control suprema of 
sub-Gaussian processes in \citet{fernique1975regularite} and 
\citet{talagrand2006generic}. The \emph{generic chaining}  bound is stronger 
than \eqref{E:conventional_chaining} and is sharp for Gaussian 
processes\footnote{See Section 8.6 in \citet{vershynin2018high}.}. To introduce 
it, we need another definition.

For a metric space $(T,d)$, a sequence of finite subsets $T_0\subset T_1\subset \cdots \subset T$ is \emph{admissible} if their cardinalities satisfy
\begin{equation}
|T_0|=1 \text{ and } |T_k|\leq 2^{2^k} \text{ for } k \geq 1. \label{E:adm_seq}
\end{equation}
Let the distance from the point $t\in T$ to the set $T_k \subset T$ be
\[
d(t,T_k) = \inf_{t'\in T_k} d(t,t').
\]

\emph{Talagrand's functional} $\gamma_2$ is then defined by the formula
\begin{equation}
\gamma_2(T,d) = \inf_{(T_k)}\sup_{t\in T} \sum_{k=0}^\infty 2^{k/2}d(t,T_k), \label{E:Talagrand_func}
\end{equation}
where the infimum is taken over all admissible sequences $(T_k)$. Note that we can restrict our attention to only those admissible sequences that eventually come arbitrarily close to any point $t\in T$, which is possible provided $(T,d)$ is separable\footnote{A metric space $(T,d)$ is \emph{separable} if it has a countable subset that is dense in $T$.}.

To understand the relation between Talagrand's functional and Dudley's entropy, let us provide the discussion from \citet{talagrand2006generic} pp.12--13 here.

Denote $N_0=1$, $N_k=2^{2^k}$ for $k\geq 1$, and
\[
e_k(T)=\inf_{S \subset T: \,\, |S|\leq N_k} \sup_{t\in T} d(t,S).
\]
Note that
\begin{align}\label{E:gamma2_estimate_1}
\gamma_2(T,d) \leq  \inf_{(T_k)} \sum_{k=0}^\infty 2^{k/2} \sup_{t\in T} d(t,T_k) =  \sum_{k=0}^\infty 2^{k/2} e_k(T),
\end{align}
where the second equality holds because minimizing the sum $ \sum_{k=0}^\infty 2^{k/2} \sup_{t\in T} d(t,T_k)$ w.r.t. all admissible sequences $(T_k)$ can be performed by separately minimizing each term $\sup_{t\in T} d(t,T_k)$ w.r.t. subsets $T_k\subset T$ satisfying $|T_k|\leq N_k$. 

The definition of $e_k(T)$ involves choosing at most $N_k$ points $S$ in $T$ such that the balls with radius $e_k(T)$ and centers in $S$ cover $T$; moreover, $e_k(T)$ is the minimal such radius, i.e.
\[
e_k(T) = \inf\left\{\varepsilon>0:\,\, N(T,d,\varepsilon)\leq N_k\right\}.
\]
It follows that if $e_k(T)<\varepsilon$, then $N(T,d,\varepsilon)>N_k$ or $N(T,d,\varepsilon)\geq N_k+1$. Hence we can write
\begin{align*}
\sqrt{\log(N_k+1)}(e_{k}(T)-e_{k+1}(T)) \leq \int_{e_{k+1}(T)}^{e_{k}(T)} \sqrt{\log N(T,d,\varepsilon)} \,d\varepsilon.
\end{align*}
Since $\log(N_k+1)\geq 2^k \log 2$ for $k\geq 0$, summation over $k\geq 0$ yields
\begin{align}\label{E:entropy_estimate_1}
\sqrt{\log 2} \sum_{k=0}^\infty 2^{k/2}(e_{k}(T)-e_{k+1}(T)) \leq \int_{0}^{e_{0}(T)} \sqrt{\log N(T,d,\varepsilon)}\,d\varepsilon,
\end{align}
where, of course, $e_0(T)=\diam(T)=\sup_{t,s\in T} d(t,s)$.

The term on the left hand side of this inequality satisfies
\begin{align*}
\sum_{k=0}^\infty 2^{k/2}(e_{k}(T)-e_{k+1}(T)) = \sum_{k=0}^\infty 2^{k/2}e_{k}(T) - \sum_{k=1}^\infty 2^{(k-1)/2}e_{k}(T) \geq \left(1-2^{-1/2}\right) \sum_{k=0}^\infty 2^{k/2}e_{k}(T).
\end{align*}
Combining this with \eqref{E:gamma2_estimate_1} and \eqref{E:entropy_estimate_1} yields the key relation
\begin{align*}
\gamma_2(T,d) \leq C \int_{0}^{\diam(T)} \sqrt{\log N(T,d,\varepsilon)}\,d\varepsilon.
\end{align*}
Hence, when used as an upper bound, Talagrand's functional is sharper than Dudley's entropy.

We are now ready to state the generic chaining bound for sub-Gaussian processes, see e.g. Theorem 8.5.3 in \citet{vershynin2018high}.

\begin{theorem}[Generic chaining]\label{Thm:chaining}
Let $Z_t$, $t\in T$ be a mean zero random process on a separable metric space $(T,d)$ with sub-Gaussian increments as in \eqref{E:subgauss_increments}. Then, for some absolute constant $C>0$,
\[
\E \sup_{t\in T} Z_t \leq CK \gamma_2(T,d).
\]

\end{theorem}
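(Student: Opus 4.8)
The plan is to prove Theorem~\ref{Thm:chaining} by the generic chaining method, which replaces the crude single-scale union bound underlying \eqref{E:conventional_chaining} by a multi-scale telescoping argument. First I would fix an admissible sequence $(T_k)$ that comes within a factor of, say, $2$ of achieving the infimum in the definition \eqref{E:Talagrand_func} of $\gamma_2(T,d)$, chosen (using separability) so that $\bigcup_k T_k$ is dense in $T$. For each $t\in T$ and each $k$ let $\pi_k(t)\in T_k$ be a nearest point, so $d(t,\pi_k(t))=d(t,T_k)$, and let $t_0$ be the single element of $T_0$. The backbone of the proof is the chaining identity
\[
Z_t-Z_{t_0}=\sum_{k\geq 1}\left(Z_{\pi_k(t)}-Z_{\pi_{k-1}(t)}\right),
\]
whose convergence follows because density forces $\pi_k(t)\to t$ while the increment condition \eqref{E:subgauss_increments} forces $Z_{\pi_k(t)}\to Z_t$. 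Since the process is mean zero, $\E Z_{t_0}=0$, so it suffices to bound $\E\sup_{t\in T}(Z_t-Z_{t_0})$.

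The probabilistic heart is to control every increment at every scale at once. At level $k$ the increment $Z_{\pi_k(t)}-Z_{\pi_{k-1}(t)}$ takes at most $|T_k|\,|T_{k-1}|\leq N_k^2$ distinct values as $t$ ranges over $T$, and by \eqref{E:subgauss_increments} each has Orlicz norm at most $K\,d(\pi_k(t),\pi_{k-1}(t))\leq 2K\,d(t,T_{k-1})$ (by the triangle inequality and $d(t,T_k)\leq d(t,T_{k-1})$). I would assign to level $k$ a threshold proportional to $u\,2^{k/2}K\,d(\pi_k(t),\pi_{k-1}(t))$ for a free parameter $u\geq 1$, apply the sub-Gaussian tail bound $\Prb(|Y|\geq\tau)\leq 2e^{-\tau^2/\|Y\|_{\psi_2}^2}$ to each increment, and take a union bound over the at most $N_k^2$ increments and over all $k\geq 1$.

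The step I expect to be the crux is verifying that this single multi-scale union bound actually closes, and here the doubly exponential growth $N_k=2^{2^k}$ is exactly what makes it fit: $\log N_k^2\asymp 2^k$, so $\sqrt{\log N_k^2}\asymp 2^{k/2}$, matching precisely the weight $2^{k/2}$ appearing in \eqref{E:Talagrand_func}. Because the chosen threshold contributes an exponent of order $u^2 2^k$, which beats the union-bound factor $\log N_k^2\asymp 2^k$ once $u$ exceeds an absolute constant, the level sum $\sum_{k\geq 1}N_k^2 e^{-cu^2 2^k}$ is geometric and decays like $e^{-c'u^2}$. On the complementary high-probability event I would sum the level thresholds to obtain, for every $t$ simultaneously,
\[
Z_t-Z_{t_0}\leq C\,u\,K\sum_{k\geq 1}2^{k/2}d(\pi_k(t),\pi_{k-1}(t))\leq C\,u\,K\sum_{k\geq 0}2^{k/2}d(t,T_k)\leq C\,u\,K\,\gamma_2(T,d),
\]
where I used $d(\pi_k(t),\pi_{k-1}(t))\leq 2\,d(t,T_{k-1})$ and the near-optimality of $(T_k)$.

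Finally I would pass from this tail estimate to a bound in expectation. Having established $\Prb\bigl(\sup_{t}(Z_t-Z_{t_0})> C\,u\,K\,\gamma_2(T,d)\bigr)\leq C e^{-c'u^2}$ for all $u\geq 1$, integrating the tail via $\E\,\xi=\int_0^\infty\Prb(\xi>s)\,ds$ (exactly as in \eqref{E:E|Y|<K}) gives $\E\sup_{t}(Z_t-Z_{t_0})\leq CK\,\gamma_2(T,d)$, and the mean-zero property removes $Z_{t_0}$ to yield $\E\sup_{t\in T} Z_t\leq CK\,\gamma_2(T,d)$, completing the proof.
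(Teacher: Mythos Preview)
Your proposal is correct and follows essentially the same generic chaining argument as the paper's proof in Appendix~\ref{Appendix:proof_chaining}: fix an admissible sequence, telescope $Z_t-Z_{t_0}$ along nearest-point projections $\pi_k(t)$, apply the sub-Gaussian tail bound to each link with threshold of order $u\,2^{k/2}K\,d(\pi_k(t),\pi_{k-1}(t))$, take a union bound over the at most $|T_k|\,|T_{k-1}|\leq 2^{2^{k+1}}$ pairs at each level and over $k$, and integrate the resulting tail (via \eqref{E:E|Y|<K}) to pass to expectation. The only cosmetic differences are that the paper reduces to finite $T$ by separability whereas you keep $T$ general and argue $\pi_k(t)\to t$ via density, and that you fix a near-optimal admissible sequence up front rather than leaving the infimum implicit.
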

\begin{proof}
See Appendix \ref{Appendix:proof_chaining}.
\end{proof}

\subsection{The main result}

We impose the following assumptions.

\begin{assum}\label{As:param_space}
The parameter $\beta$ belongs to a separable metric space $(\Bbf,d_\Bbf)$.
\end{assum}

\begin{assum}\label{As:MA}
For each $\beta\in\Bbf$, random variables $x_{it}(\beta)$ follow different MA($\infty$) processes for each $i$, viz.
\begin{align}\label{E:MA_infty}
x_{it}(\beta)=\sum_{\tau=0}^\infty \psi_{i\tau}(\beta) 
\varepsilon_{i,t-\tau}(\beta),
\end{align}
\end{assum}
where  $\psi_{i\tau}(\beta)$ are nonrandom coefficients such that, for all $i=1,\dots,N$ and $\beta\in\Bbf$,
\begin{align}\label{E:MA_infty_coeffs}
|\psi_{i\tau}(\beta)| \leq \theta_{\tau}, \text{ where } \sum_{\tau=0}^\infty \theta_{\tau} < \infty.
\end{align}

\begin{assum}\label{As:subgauss}
Innovations $\varepsilon_{i\tau}(\beta)$ are independent, mean zero, sub-Gaussian random variables with uniformly bounded scaling factors, i.e. there exists $K_1>0$ s.t. for all $(i,\tau,\beta)\in \N\times \Z \times \Bbf$
    \[
    ||\varepsilon_{i\tau}(\beta)||_{\psi_2} \leq K_1.
    \]
\end{assum}

\begin{assum}\label{As:subgauss_increments}
Innovations $\varepsilon_{i\tau}(\beta)$ are separable\footnote{Let $(\Bbf,d_\Bbf)$ be a separable metric space with a countable dense subset $D$. A stochastic process $\xi$ on $\Bbf$ is called \emph{separable} if for all $\beta\in \Bbf$, there exists a sequence $\beta_i\in D$ such that $\beta_i \to \beta$ and $\xi(\beta_i) \to \xi(\beta)$ almost surely. Non-separable stochastic processes have separable copies under very weak conditions, see \citet{shalizi2010book}.} stochastic processes whose increments are sub-Gaussian with uniformly bounded constants, i.e. there exists $K_2>0$ s.t. for all $(i,\tau)\in \N\times\Z$ and $(\beta_1,\beta_2)\in \Bbf \times \Bbf$
    \[
    ||\varepsilon_{i\tau}(\beta_1)-\varepsilon_{i\tau}(\beta_2)||_{\psi_2} \leq K_2 \cdot d_\Bbf(\beta_1,\beta_2).
    \]
\end{assum}

\autoref{As:param_space} is very weak and only imposes separability of the metric space $\Bbf$ which holds for most parameter spaces encountered in practice such as Euclidean spaces and spaces of integrable functions. \autoref{As:MA} is similar to case (ii) in Lemma S.2.1 of \citet{MoonWeidner2017} and allows $x_{it}(\beta)$ to be weakly dependent over time.
\autoref{As:subgauss} and \autoref{As:subgauss_increments} impose uniform sub-Gaussianity on the innovations $\varepsilon_{it}(\beta)$ and their increments $\varepsilon_{it}(\beta_1)-\varepsilon_{it}(\beta_2)$, respectively. Note that \autoref{As:subgauss_increments} is equivalent to the tail bound
\[
\Prb\left( |\varepsilon_{it}(\beta_1)-\varepsilon_{it}(\beta_2)| \leq t \cdot d_\Bbf(\beta_1,\beta_2) \right) \geq 1-2e^{-\frac{t^2}{K_2^2}} \text{ for all } t\geq 0.
\]

Denote $\psi_\tau(\beta) = (\psi_{1\tau}(\beta),\dots,\psi_{N\tau}(\beta))'$ and let $\Xi_{-\tau}(\beta)$ the $N\times T$ matrix consisting of $\varepsilon_{it}(\beta)$, $i=1,\dots,N$, $t=1-\tau,\dots,T-\tau.$ Equation \eqref{E:MA_infty} can be rewritten in the matrix form as
\begin{align*}
X(\beta) = (x_{ij}(\beta)) = \sum_{\tau=0}^\infty \diag(\psi_\tau(\beta)) \Xi_{-\tau}(\beta).
\end{align*}
Suppose for a moment that we have a bound on $\Xi_{-\tau}(\beta)$ of the form
\[
\E \sup_\beta ||\Xi_{-\tau}(\beta)|| \leq \varphi(N,T,\Bbf),
\]
where $\varphi$ does not depend on $\tau$. Then
\begin{align}
    \E \sup_\beta ||X(\beta)|| &= \E \sup_\beta \left\| \sum_{\tau=0}^\infty \diag(\psi_\tau(\beta)) \cdot \Xi_{-\tau}(\beta) \right\| \leq \E \sup_\beta \sum_{\tau=0}^\infty \left\|  \diag(\psi_\tau(\beta)) \cdot \Xi_{-\tau}(\beta) \right\| \notag \\
    &\leq \E \sum_{\tau=0}^\infty  \sup_\beta \left\|  \diag(\psi_\tau(\beta)) \cdot \Xi_{-\tau}(\beta) \right\| \leq \E \sum_{\tau=0}^\infty  \sup_\beta \left\|  \diag(\psi_\tau(\beta)) \right\|  \cdot\left\| \Xi_{-\tau}(\beta) \right\| \notag \\
    &\leq \E \sum_{\tau=0}^\infty  \sup_\beta \left\|  \diag(\psi_\tau(\beta)) \right\| \cdot \sup_\beta \left\| \Xi_{-\tau}(\beta) \right\| =  \sum_{\tau=0}^\infty \sup_\beta \left\|  \diag(\psi_\tau(\beta)) \right\| \E \sup_\beta \left\| \Xi_{-\tau}(\beta) \right\| \notag \\
	&\leq \varphi(N,T,\Bbf) \sum_{\tau=0}^\infty \sup_\beta \max_{i=1,\dots,N} |\psi_{i\tau}(\beta)| \leq D \varphi(N,T,\Bbf). \label{E:MAinfty_bound_same_as_MA0}
\end{align}
This shows that the bound on $\E \sup_\beta ||X(\beta)||$ is, up to the absolute constant $D$, the same as the bound on $\E \sup_\beta ||\Xi_{-\tau}(\beta)||$. Hence we can focus on obtaining the latter bound from now on. It will be clear from the proof that the bound will not depend on $\tau$, so we consider the case $\tau=0$ and denote $\Xi=\Xi_0$ for brevity.

The operator norm of $\Xi(\beta)$ can be expressed as 
\[
||\Xi(\beta)||=\sup_{u\in \U,v \in \V} Z(u,v,\beta),
\]
where $\U$ and $\V$ are unit spheres in $\R^N$ and $\R^T$, respectively, and the process 
\[
Z(u,v,\beta):=u'\Xi(\beta)v = \sum_{i=1}^N \sum_{t=1}^T u_i v_t \varepsilon_{it}(\beta).
\]
Define the $L_1$ product metric on $\U\times\V\times\Bbf$ by
\[
d((\tilde{u},\tilde{v}, \tilde{\beta}), (u,v,\beta)) = d_{\R^N}(\tilde{u},u) + d_{\R^T}(\tilde{v},v) + d_\Bbf(\tilde{\beta},\beta).
\]
where $d_{\R^d}$ denotes the standard Euclidean metric on $\R^d$.

To obtain a uniform bound on $||\Xi(\beta)||$, we would like to apply \autoref{Thm:chaining} to the process $Z(\cdot)$ defined on the metric space $(\U\times \V\times \Bbf, d)$. Our first lemma asserts that $Z$ has sub-Gaussian increments.

\begin{lem}\label{Lem:subgauss}
Under Assumptions \ref{As:param_space}, \ref{As:subgauss}, \ref{As:subgauss_increments}, the process $Z$ has sub-Gaussian increments w.r.t. the metric $d$, with the constant $K=\max(K_1,K_2).$
\end{lem}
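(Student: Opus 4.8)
The plan is to reduce the increment of $Z$ between an arbitrary pair of points to a sum of three increments, each of which changes only one of the three coordinate blocks ($u$, $v$, or $\beta$), and then to control each of these separately as a weighted sum of \emph{independent} sub-Gaussian random variables. Concretely, for points $(\tilde u,\tilde v,\tilde\beta)$ and $(u,v,\beta)$ I would insert two intermediate points and telescope:
\[
Z(\tilde u,\tilde v,\tilde\beta) - Z(u,v,\beta) = \bigl[Z(\tilde u,\tilde v,\tilde\beta)-Z(u,\tilde v,\tilde\beta)\bigr] + \bigl[Z(u,\tilde v,\tilde\beta)-Z(u,v,\tilde\beta)\bigr] + \bigl[Z(u,v,\tilde\beta)-Z(u,v,\beta)\bigr].
\]
By the triangle inequality for the Orlicz norm it then suffices to bound each bracket, aiming for bounds proportional to $K_1\,d_{\R^N}(\tilde u,u)$, $K_1\,d_{\R^T}(\tilde v,v)$, and $K_2\,d_\Bbf(\tilde\beta,\beta)$, respectively; summing these reproduces the $L_1$ product metric $d$ and delivers the constant $\max(K_1,K_2)$ (up to the absolute factor that \autoref{Thm:chaining} absorbs into its own constant $C$).

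The workhorse for each bracket is the standard fact that a linear combination $\sum_j a_j Y_j$ of independent, mean-zero, sub-Gaussian random variables is itself sub-Gaussian, with $\bigl\|\sum_j a_j Y_j\bigr\|_{\psi_2}^2 \leq C\sum_j a_j^2\,\|Y_j\|_{\psi_2}^2$ (see, e.g., \citet{vershynin2018high}). For the third bracket I would write it as $\sum_{i,t} u_i v_t\,[\varepsilon_{it}(\tilde\beta)-\varepsilon_{it}(\beta)]$: the summands are independent across $(i,t)$ by \autoref{As:subgauss}, mean zero, and have Orlicz norm at most $K_2\,d_\Bbf(\tilde\beta,\beta)$ by \autoref{As:subgauss_increments}, while $\sum_{i,t}(u_iv_t)^2 = \|u\|_2^2\|v\|_2^2 = 1$ since $u,v$ lie on the unit spheres, giving the desired $K_2\,d_\Bbf(\tilde\beta,\beta)$ bound. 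The first bracket equals $\sum_i(\tilde u_i-u_i)W_i$ with $W_i := \sum_t \tilde v_t\,\varepsilon_{it}(\tilde\beta)$; here I would apply the linear-combination bound twice—once over $t$, using $\sum_t\tilde v_t^2=1$ and $\|\varepsilon_{it}(\tilde\beta)\|_{\psi_2}\leq K_1$ to get $\|W_i\|_{\psi_2}\leq CK_1$, and once over $i$, using independence of the $W_i$ across $i$ (again \autoref{As:subgauss}) and $\sum_i(\tilde u_i-u_i)^2 = d_{\R^N}(\tilde u,u)^2$—to obtain a bound of order $K_1\,d_{\R^N}(\tilde u,u)$. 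The second bracket is handled symmetrically with the roles of the indices $i$ and $t$ interchanged.

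I expect the only genuine subtlety to be the bookkeeping of the independence structure, so that the linear-combination inequality applies at each stage: for the $u$-block the inner variables $W_i$ must be independent across $i$, and for the $v$-block the analogous variables must be independent across $t$, both of which hold because the innovations $\varepsilon_{it}(\cdot)$ are independent across the index $(i,t)$ and each bracket fixes a single value of $\beta$, so no dependence is introduced through the parameter. The telescoping decomposition is precisely what makes the product coefficients $u_iv_t$ tractable—changing $u$ and $v$ one block at a time avoids the bilinear cross term $\tilde u_i\tilde v_t - u_iv_t$ that would otherwise obstruct a direct estimate—so the decomposition, rather than any single Orlicz-norm bound, is the crux of the argument.
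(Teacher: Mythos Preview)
Your proposal is correct and mirrors the paper's proof: the same three-term telescoping decomposition (changing $u$, $v$, $\beta$ one block at a time) followed by the standard $\psi_2$-norm bound for sums of independent sub-Gaussians. The only cosmetic difference is that the paper applies the linear-combination inequality once over the double index $(i,t)$ rather than iterating it first over $t$ and then over $i$, which is slightly cleaner but equivalent.
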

\begin{proof}
For $(\tilde{u},\tilde{v}, \tilde{\beta}), (u,v,\beta) \in \U\times \V\times \Bbf$, write
\[
Z(\tilde{u},\tilde{v}, \tilde{\beta})-Z(u,v,\beta) = (\tilde{u}-u)'\Xi(\tilde{\beta})\tilde{v} + u'(\Xi(\tilde{\beta})-\Xi(\beta))\tilde{v} + u'\Xi(\beta) (\tilde{v}-v) = z_I + z_{II}+z_{III}.
\]
Recall a standard result for the $\psi_2$ norm (see e.g. equation (2.1) in \citet{mendelson2008subgaussian}): there exists an absolute constant $c>0$ such that for all constants $a_i$ and independent centered variables $\xi_1,\dots,\xi_n$ one has
\[
\left|\left|\sumin a_i \xi_i \right|\right|_{\psi_2} \leq c \sqrt{\sumin a_i^2 ||\xi_i||_{\psi_2}^2} \leq c ||a|| \max_{i=1,\dots,n} ||\xi_i||_{\psi_2}.
\]
Applying this inequality, we obtain
\begin{align*}
    ||z_I||_{\psi_2} &= \left|\left|\sum_{i=1}^N \sum_{t=1}^T(\tilde{u}_i-u_i)v_t \varepsilon_{it}(\tilde{\beta})\right|\right|_{\psi_2} \leq c K_1 \cdot d_{\R^N}(\tilde{u},u) ,\\
    ||z_{II}||_{\psi_2}  &= \left|\left|\sum_{i=1}^N \sum_{t=1}^T u_i v_t (\varepsilon_{it}(\tilde{\beta})-x_{it}(\beta)) \right|\right|_{\psi_2} \leq c K_2 \cdot d_\Bbf(\tilde{\beta},\beta),\\
    ||z_{III}||_{\psi_2} &= \left|\left|\sum_{i=1}^N \sum_{t=1}^T u_i (v_t-\tilde{v}_t) \varepsilon_{it}(\tilde{\beta})\right|\right|_{\psi_2} \leq c K_1 \cdot d_{\R^T}(\tilde{v},v).
\end{align*}
This implies
\begin{align*}
    ||Z(\tilde{u},\tilde{v}, \tilde{\beta})-Z(u,v,\beta)||_{\psi_2} &\leq ||z_{I}||_{\psi_2} + ||z_{II}||_{\psi_2} + ||z_{III}||_{\psi_2} \\
&\leq c\max(K_1,K_2)\cdot \left(d_{\R^N}(\tilde{u},u) + d_{\R^T}(\tilde{v},v) + d_\Bbf(\tilde{\beta},\beta)\right)\\
&= c\max(K_1,K_2)\cdot d((\tilde{u},\tilde{v}, \tilde{\beta}),(u,v,\beta)),
\end{align*}
which completes the proof.
\end{proof}

Our second lemma establishes the bound on Talagrand's functional of a product space in terms of Talagrand's functionals of component spaces.

\begin{lem}[Talagrand's functional of a product space] \label{Lem:Talagrand_prod}
Consider a finite number of metric spaces $(T_l,d_{l}),$ $l=1,\dots,L$ and the product space $T=\bigotimes_{l=1}^L T_l$ with the $L^1$ product metric defined by
\[
d(t ,t') = \sum_{l=1}^L d_{l}(t_l,t_l') \text{ for } t=(t_1,\dots,t_L), t'=(t_1',\dots,t_L') \in T.
\]
Talagrand's functional $\gamma_2$ of $T$ satisfies
\begin{align*}
    &\gamma_2(T,d) \leq (1+\sqrt{2}) \sum_{l=1}^L \gamma_2(T_l,d_l).
\end{align*}
\end{lem}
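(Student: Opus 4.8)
The plan is to \emph{lift} near-optimal admissible sequences of the factor spaces to an admissible sequence of the product $T$, and then exploit the fact that the $L^1$ product metric makes the distance to a product set additive across coordinates. Concretely, fix $\eta>0$ and, for each $l$, choose an admissible sequence $(A_k^{(l)})_{k\ge 0}$ of $(T_l,d_l)$ that is $\eta$-optimal, i.e.
\[
\sup_{t_l\in T_l}\sum_{k=0}^\infty 2^{k/2}d_l(t_l,A_k^{(l)})\le \gamma_2(T_l,d_l)+\eta .
\]
Two structural facts drive the argument. First, the cardinality thresholds grow doubly exponentially, and in particular $N_{k-1}^2=2^{2\cdot 2^{k-1}}=2^{2^k}=N_k$ for $k\ge 2$, so a product of two sets each of size at most $N_{k-1}$ still fits into level $k$. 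Second, for $t=(t_1,\dots,t_L)$ and any product set $S=\prod_l A^{(l)}$ one has $d(t,S)=\sum_l d_l(t_l,A^{(l)})$, because the $L^1$ metric lets each coordinate be minimized separately (the same device used in deriving \eqref{E:gamma2_estimate_1}). Together these let me assemble the product sequence and split the chaining sum factor by factor.

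I would first carry out the two-factor case $L=2$ in full, since it already produces the stated constant. Put $S_0=A_0^{(1)}\times A_0^{(2)}$ (a single point) and $S_k=A_{k-1}^{(1)}\times A_{k-1}^{(2)}$ for $k\ge 1$; the cardinality identity above, together with $|S_1|=1$ and the nesting of the $A_k^{(l)}$, shows $(S_k)$ is admissible. Using additivity of the distance and reindexing $j=k-1$,
\[
\sum_{k=1}^\infty 2^{k/2}d_1(t_1,A_{k-1}^{(1)})=\sqrt2\sum_{j=0}^\infty 2^{j/2}d_1(t_1,A_j^{(1)})\le \sqrt2\,(\gamma_2(T_1,d_1)+\eta),
\]
and likewise for the second factor, while the boundary term at $k=0$ contributes only $d_1(t_1,A_0^{(1)})+d_2(t_2,A_0^{(2)})$, each summand being dominated by the corresponding full series and hence by $\gamma_2(T_l,d_l)+\eta$. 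Summing, the ``$+1$'' from the $k=0$ term and the ``$\sqrt2$'' from the shift combine to give $(1+\sqrt2)\sum_{l=1}^2(\gamma_2(T_l,d_l)+\eta)$ uniformly in $t$; letting $\eta\to 0$ yields the claim for $L=2$.

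The main obstacle is the passage to general $L$ while keeping the constant at $1+\sqrt2$. The binding constraint is admissibility: a full product $\prod_{l=1}^L A_{k-1}^{(l)}$ has cardinality up to $N_{k-1}^{L}=2^{L\,2^{k-1}}$, which exceeds $N_k$ as soon as $L\ge 3$, so the clean shift-by-one is no longer admissible. One natural remedy is to shift indices by $\lceil\log_2 L\rceil$ so that $N_{k-\lceil\log_2 L\rceil}^{L}\le N_k$; another is to group the factors as $(T_1\otimes\cdots\otimes T_{L-1})\otimes T_L$ and iterate the two-factor bound. I expect the careful cardinality bookkeeping of whichever interleaving schedule $\sigma(k,l)$ is chosen --- checking at each level that $\sum_l \log_2|A^{(l)}_{\sigma(k,l)}|\le 2^{k}$, and verifying that the reindexed series still telescopes against $\sum_l\gamma_2(T_l,d_l)$ --- to be the delicate heart of the proof, and in particular the point at which one must track exactly how the resulting constant depends on $L$. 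I would therefore concentrate the effort on designing a schedule whose reindexing factor can be controlled, rather than on the routine metric estimates, which follow the $L=2$ template verbatim.
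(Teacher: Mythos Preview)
Your $L=2$ argument is essentially identical to the paper's: the paper also shifts the factor sequences by one level, sets $\tilde T_0=X_0\times Y_0$ and $\tilde T_k=X_{k-1}\times Y_{k-1}$ for $k\ge 1$, uses the cardinality identity $N_{k-1}^2=N_k$, splits the $L^1$ distance coordinatewise, and picks up exactly the factor $1+\sqrt2$ from the boundary term plus the $\sqrt2$ from the reindexing. There is no substantive difference in that part.

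Where you diverge from the paper is in the passage to general $L$. The paper gives only the $L=2$ case and dispatches the rest with the single sentence ``The case of arbitrary $L$ follows immediately by inspection.'' It does not supply the interleaving schedule you are looking for, nor any argument that the constant $1+\sqrt2$ survives. Your diagnosis is correct: the shift-by-one product $\prod_{l=1}^L A_{k-1}^{(l)}$ has cardinality up to $N_{k-1}^L$, which exceeds $N_k$ once $L\ge 3$, so the literal construction is not admissible. A shift by $m=\lceil\log_2 L\rceil$ restores admissibility but produces a constant of order $2^{m/2}\asymp\sqrt L$ rather than $1+\sqrt2$; iterating the two-factor bound gives $(1+\sqrt2)^{L-1}$. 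So the lemma's stated constant appears optimistic for $L>2$, and the paper's own proof does not establish it. This is harmless for the paper's purposes---the only application is $L=3$ (the product $\U\times\V\times\Bbf$) and the resulting constant is immediately absorbed into an unspecified absolute constant $C$---but you are right that the ``delicate heart'' you anticipate is simply not present in the paper.
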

\begin{proof}
See Appendix \ref{Appendix:proof_Talagrand}.
\end{proof}


Finally, by \autoref{Lem:subgauss}, we can apply the generic chaining bound of \autoref{Thm:chaining} to $Z(u,v,\beta)$ defined on the separable metric space $T=\U\times \V \times \Bbf$ with the $L_1$ metric $d$. \autoref{Lem:Talagrand_prod} then yields
\begin{align}\label{E:sharpest_bnd}
\E \sup_{\beta \in \Bbf} ||\Xi(\beta)||  = \E\left[ \sup_{(u,v,\beta)\in \U\times\V\times \Bbf} Z(u,v,\beta)\right] \leq CK\left( \gamma_2(\U,d_{\R^N}) + \gamma_2(\V,d_{\R^T}) + \gamma_2(\Bbf,d_{\Bbf}) \right).
\end{align}

For the unit sphere $S^{d-1}$ in $\R^d$, its Dudley's entropy satisfies
\[
\int_0^{\diam(S^{d-1})} \sqrt{\log N(S^{d-1},||\cdot||,\varepsilon)} \,d\varepsilon \leq C \sqrt{d}.
\]
Besides, Talagrand's functional is bounded from above by Dudley's entropy (e.g. Exercise 8.5.7 in \citet{vershynin2018high}), up to absolute constant factors.

Applying these bounds to unit spheres $\U\subset \R^N$ and $\V\subset \R^T$ gives
\[
\E \sup_{\beta \in \Bbf} ||\Xi(\beta)|| \leq CK\left( \sqrt{\max(N,T)} + \gamma_2(\Bbf,d_\Bbf) \right).
\]

Finally, taking into account the inequality \eqref{E:MAinfty_bound_same_as_MA0}, we obtain the main theoretical result of this paper.

\begin{theorem}\label{Thm:main_result}
Under Assumptions \ref{As:param_space}, \ref{As:MA}, \ref{As:subgauss}, \ref{As:subgauss_increments},
\begin{align*}
\E \sup_{\beta \in \Bbf} ||X(\beta)|| \leq CK\left( \sqrt{\max(N,T)} + \gamma_2(\Bbf,d_\Bbf) \right),
\end{align*}
where $K=\max(K_1,K_2).$
\end{theorem}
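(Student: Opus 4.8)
The plan is to assemble the ingredients already in place: the generic chaining bound (\autoref{Thm:chaining}), the sub-Gaussian increments lemma (\autoref{Lem:subgauss}), the product-space estimate for Talagrand's functional (\autoref{Lem:Talagrand_prod}), and the elementary reduction \eqref{E:MAinfty_bound_same_as_MA0}. First I would invoke \eqref{E:MAinfty_bound_same_as_MA0}, which reduces the problem from the MA($\infty$) matrix $X(\beta)$ to a single innovation matrix $\Xi(\beta)=\Xi_0(\beta)$: because the coefficient bounds $\theta_\tau$ in \autoref{As:MA} are summable, any uniform bound $\varphi(N,T,\Bbf)$ on $\E\sup_\beta\|\Xi_{-\tau}(\beta)\|$ that is independent of $\tau$ transfers to $\E\sup_\beta\|X(\beta)\|$ up to an absolute constant $D$. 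Hence it suffices to bound $\E\sup_\beta\|\Xi(\beta)\|$.

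Next I would pass to the variational form of the operator norm, writing $\sup_\beta\|\Xi(\beta)\|=\sup_{(u,v,\beta)\in\U\times\V\times\Bbf}Z(u,v,\beta)$ for the process $Z(u,v,\beta)=u'\Xi(\beta)v$ on the product space equipped with the $L^1$ metric $d$. The index set is separable by \autoref{As:param_space} together with separability of the Euclidean spheres, and $Z$ has mean zero. \autoref{Lem:subgauss} certifies that $Z$ has sub-Gaussian increments with constant $K=\max(K_1,K_2)$, so the hypotheses of \autoref{Thm:chaining} are met and I obtain $\E\sup_\beta\|\Xi(\beta)\|\leq CK\,\gamma_2(\U\times\V\times\Bbf,d)$.

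The final step is to evaluate the product-space functional. Applying \autoref{Lem:Talagrand_prod} splits it into $\gamma_2(\U,d_{\R^N})+\gamma_2(\V,d_{\R^T})+\gamma_2(\Bbf,d_\Bbf)$, which is exactly \eqref{E:sharpest_bnd}. For the two sphere factors I would use that $\gamma_2$ is dominated (up to an absolute constant) by Dudley's entropy, together with the standard estimate $\int_0^{\diam(S^{d-1})}\sqrt{\log N(S^{d-1},\|\cdot\|,\varepsilon)}\,d\varepsilon\leq C\sqrt d$, giving $\gamma_2(\U)\leq C\sqrt N$ and $\gamma_2(\V)\leq C\sqrt T$; since $\sqrt N+\sqrt T\leq 2\sqrt{\max(N,T)}$, these collapse to $C\sqrt{\max(N,T)}$. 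Combining with the MA reduction and absorbing $D$ and $1+\sqrt2$ into the universal constant $C$ yields the claim.

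I expect the assembly itself to be routine, since the substantive work has been delegated to the two lemmas. If one were proving the result from scratch, the main obstacle would be \autoref{Lem:Talagrand_prod}: the naive Cartesian product of admissible sequences for the factor spaces violates the cardinality constraint $|T_k|\leq 2^{2^k}$, so one must interleave the per-factor chains with a careful reindexing that keeps cardinalities under control while losing only the constant $1+\sqrt2$. Within the present assembly, the only points requiring care are verifying separability of the index set so that \autoref{Thm:chaining} applies, and confirming that the increment constant in \autoref{Lem:subgauss} --- and hence $\varphi(N,T,\Bbf)$ in \eqref{E:MAinfty_bound_same_as_MA0} --- is genuinely independent of $\tau$.
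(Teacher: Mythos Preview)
Your proposal is correct and follows essentially the same route as the paper: reduce via \eqref{E:MAinfty_bound_same_as_MA0} to the innovation matrix $\Xi(\beta)$, write the operator norm variationally, apply \autoref{Lem:subgauss} and \autoref{Thm:chaining}, then split $\gamma_2$ with \autoref{Lem:Talagrand_prod} and bound the sphere terms by Dudley's entropy. The order of assembly, the constants absorbed, and even the caveats you flag (separability, $\tau$-independence of the bound) match the paper's argument.
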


\noindent {\bf Remarks}
\begin{itemize}
\item[(i)] Generic chaining yields not only the bound on the expected value, but also tail bounds and bounds on moments of $\sup_{\beta \in \Bbf} ||X(\beta)|| $, see e.g. \citet{dirksen2015tail}. In particular, it follows from Theorem 8.5.5 of \citet{vershynin2018high} that, for all $u\geq 0$, the event
\[
\sup_{\beta \in \Bbf} ||X(\beta)|| \leq CK\left[ \sqrt{\max(N,T)}+\gamma_2(\Bbf,d_\Bbf) + (2+\text{diam}(\Bbf))u\right]
\]
holds with probability at least $1-2\exp(-u^2)$, where $\text{diam}(\Bbf)$ is the diameter of $\Bbf$ in $d_\Bbf$.

\item[(ii)] Suppose $\varepsilon_{it}(\beta)$ are Gaussian random variables. Then the process $Z(u,v,\beta)$ is Gaussian and therefore the bound \eqref{E:sharpest_bnd} is sharp, up to an absolute constant, by the majorizing measure theorem, see Theorem 8.6.1 in \citet{vershynin2018high}.

\item[(iii)] If $\Bbf$ is a bounded set in $\R^d$, the main result and majorization of Talagrand's functional with Dudley's entropy yield
\[
\E \sup_{\beta \in \Bbf} ||X(\beta)|| \leq CK \sqrt{\max(N,T,d)}.
\]
In particular, if $\Bbf$ consists of one element (so that there is no dependence on $\beta$), the bound reduces to
\begin{align*}
\E ||X|| \leq CK \sqrt{\max(N,T)},
\end{align*}
which is a classical result in random matrix theory, see e.g. \citet{Latala2005}.

\item[(iv)] The dimension of $\Bbf$ is allowed to grow with the sample size; of course, to maintain the $\sqrt{\max(N,T)}$ rate for the operator norm, the dimension should not grow faster than $\sqrt{\max(N,T)}$.

\item[(v)]  \autoref{Thm:main_result} can be generalized to the case of Orlicz norms $||\cdot||_{\psi_\alpha}$ with $\psi_\alpha(x)=\exp(x^\alpha)-1$, $\alpha\geq 1$. An important special case $\alpha=1$ corresponds to \emph{sub-exponential} random variables.

The bound will take the form
\[
\E \sup_{\beta \in \Bbf} ||X(\beta)|| \leq CK\left( \left(\max(N,T)\right)^{1/\alpha} + \gamma_\alpha(\Bbf,d_\Bbf) \right),
\]
where the generalized Talagrand's functional is defined by
\[
\gamma_\alpha(T,d) = \inf_{(T_k)}\sup_{t\in T} \sum_{k=0}^\infty 2^{k/\alpha}d(t,T_k).
\]
The proof is similar to the case $\alpha=2$. The appropriate version of the generic chaining bound is
\[
\E\sup_{t\in T} Z(t) \leq CK \gamma_\alpha(T,d),
\]
where $Z(\cdot)$ is a stochastic process with bounded $\psi_\alpha$-Orlicz increments. Also,
\[
\gamma_\alpha(T,d) \leq C \int_0^{\diam(T)} \psi_\alpha^{-1}\left(N(T,d,\varepsilon)\right)\,d\varepsilon.
\]
Both results can be found in \citet{talagrand2006generic}.
\end{itemize}

\section{Applications}\label{section.application}

\subsection{Operator norm minimizing estimator}


In this section, we investigate a new estimator that minimizes the operator norm of the moment function matrix. Suppose that $\varepsilon_{it}(\beta) \in \mathbb{R}^L $ are $L$ moment functions of $\beta \in \mathbf{B} \subset \mathbb{R}^K$ such that $\E(\varepsilon_{it}(\beta_0)) = 0$. For simplicity, assume that $L = K = 1$.  Let $\varepsilon(\beta) = [\varepsilon_{it}(\beta)]$, the $N \times T$ matrix of moment functions.  

The conventional method of moment estimator solves
\[
\tilde{\beta} = \arg\min_{\beta \in \mathbf{B}} \left| \frac{1}{NT} \sum_{i,t} \varepsilon_{it}(\beta) \right| 
= \arg\min_{\beta \in \mathbf{B}} \left| \frac{\mathbf{1}_N^{\prime}}{\sqrt{N}} \left(\frac{ \varepsilon(\beta)}{\sqrt{NT}} \right) \frac{\mathbf{1}_T}{\sqrt{T}} \right|, 
\]
where $\mathbf{1}_N$ is the $N$-vector of ones. 

The new estimator we propose minimizes the operator norm of the moment function matrix $\varepsilon(\beta)$, 
\begin{align*}
\widehat{\beta} 
&:= \arg\min_{\beta \in \mathbf{B}} \frac{\| \varepsilon(\beta) \|}{\sqrt{NT}} \\
&= \arg\min_{\beta \in \mathbf{B}}  \sup_{\| w \| = 1, \| v \| = 1} w' \left( \frac{\varepsilon(\beta)}{\sqrt{NT}} \right) v.
\end{align*}
In this section we establish consistency of $\widehat{\beta}$ using our main result of the previous section.

\begin{assum}\label{as:LONE} (i) the parameter set $\mathbf{B}$ is a bounded subset of $\R$, (ii) the centered moment function $\varepsilon_{it}(\beta) - \E( \varepsilon_{it}(\beta))$ satisfies the conditions of Assumptions \ref{As:MA}-\ref{As:subgauss_increments}, and (iii) for any $\epsilon > 0$, there exists $\delta > 0$ such that $ \inf_{ | \beta - \beta_0 | \geq \epsilon}  \frac{\| \E(\epsilon(\beta)) \|}{\sqrt{NT}} \geq 2\delta$. 	
\end{assum}

Conditions (i)-(ii) of Assumption \ref{as:LONE} ensure that $\varepsilon_{it}(\beta) - \E( \varepsilon_{it}(\beta))$ satisfies Assumptions \ref{As:param_space}-\ref{As:subgauss_increments}. The last condition (iii) corresponds to the identification condition of the extremum estimator.

For consistency of $\widehat{\beta}$, it suffices to show that for any $\epsilon>0$, there exists $\delta > 0$ such that 
\begin{equation}
\inf_{ | \beta - \beta_0 | \geq \epsilon} \frac{\| \varepsilon(\beta) \|}{\sqrt{NT}} - \frac{\| \varepsilon(\beta_0) \|}{\sqrt{NT}} \geq \delta
\end{equation}
with probability approaching one.

First, note that, since $\E( \varepsilon(\beta_0))=0$, the triangle inequality yields
\begin{align}\label{E:eps_beta0}
 \frac{\| \varepsilon(\beta_0) \|}{\sqrt{NT}} \leq \sup_{\beta \in \mathbf{B}} \frac{\| \varepsilon(\beta) - \E(\varepsilon(\beta) \|}{\sqrt{NT}}.
\end{align}
On the other hand,
\begin{align}\label{E:eps_beta}
	\inf_{ | \beta - \beta_0 | \geq \epsilon} \frac{\| \varepsilon(\beta) \|}{\sqrt{NT}} 
	&\geq \inf_{ | \beta - \beta_0 | \geq \epsilon} \frac{\| \E( \varepsilon(\beta) ) \|}{\sqrt{NT}} - \sup_{\beta \in \mathbf{B}} \frac{\| \varepsilon(\beta) - \E(\varepsilon(\beta) \|}{\sqrt{NT}}. 
\end{align}
Combine \eqref{E:eps_beta0} and \eqref{E:eps_beta} to obtain
\begin{align}\label{E:eps_beta_beta0}
	\inf_{ | \beta - \beta_0 | \geq \epsilon} \frac{\| \varepsilon(\beta) \|}{\sqrt{NT}} - \frac{\| \varepsilon(\beta_0) \|}{\sqrt{NT}}
	&\geq \inf_{ | \beta - \beta_0 | \geq \epsilon} \frac{\| \E( \varepsilon(\beta) ) \|}{\sqrt{NT}} - 2\sup_{\beta \in \mathbf{B}} \frac{\| \varepsilon(\beta) - \E(\varepsilon(\beta) \|}{\sqrt{NT}}.
\end{align}
Finally, choose $\delta$ as in Assumption \ref{as:LONE}(iii) to guarantee
\[
	\inf_{ | \beta - \beta_0 | \geq \epsilon}  \frac{\| \E(\varepsilon(\beta)) \|}{\sqrt{NT}} \geq 2\delta
\]
and note that \autoref{Thm:main_result} gives
\[
\sup_{\beta \in \mathbf{B}} \frac{\| \varepsilon(\beta) - \E(\varepsilon(\beta) \|}{\sqrt{NT}} = o_p(1).
\]
Then \eqref{E:eps_beta_beta0} implies
\begin{align*}
\inf_{ | \beta - \beta_0 | \geq \epsilon} \frac{\| \varepsilon(\beta) \|}{\sqrt{NT}} - \frac{\| \varepsilon(\beta_0) \|}{\sqrt{NT}} \geq 2\delta - 2o_p(1) \geq \delta \quad \text{w.p.a.}1,
\end{align*}
which finishes the proof of consistency of $\hat{\beta}$.

\noindent {\bf Remarks}
\begin{itemize}
	\item[(i)] If $\varepsilon_{it}(\beta)$ are iid, then the identification condition Assumption \ref{as:LONE} (iii) becomes the usual identification condition, that is, for any $\epsilon > 0$, there exists $\delta > 0$ such that $ \inf_{ | \beta - \beta_0 | \geq \epsilon}  | \E( \varepsilon_{it}(\beta) ) | > 2\delta$. This is because $\frac{\| \E( \varepsilon(\beta) ) \|}{\sqrt{NT}} = |\E( \varepsilon_{it}(\beta) )| \left\| \frac{\mathbf{1}_N}{\sqrt{N}} \frac{\mathbf{1}_T'}{\sqrt{T}} \right\| = |\E( \varepsilon_{it}(\beta) )|.$
	
	\item[(ii)] Suppose that $\varepsilon_{it}(\beta) = (\varepsilon_{1,it}(\beta),...,\varepsilon_{L,it}(\beta))' \in \mathbb{R}^{L}$. Instead of the  operator norm objective function, we may also consider  
	\[
	\sum_{l=1}^L \omega_l \frac{ \| \varepsilon_l(\beta) \| }{\sqrt{NT}}, 
	\]
	where $\omega_l$ are weights.
	
	\item[(iii)] We can also extend the objective function to be the sum of $R_{NT}$ largest singular values, where $R_{NT}$ is a sequence of positive integers such that $R_{NT} \rightarrow \infty$ while $\frac{R_{NT}}{\sqrt{\min(N,T)}} \rightarrow 0$:
	\[
	\frac{1}{\sqrt{NT}}\sum_{r=1}^{R_{NT}} s_r(\varepsilon(\beta)),
	\] 
	where $s_r(A)$ is the $r^{th}$ largest singular value of matrix $A$.
	Since $\| \varepsilon(\beta) - \E(\varepsilon(\beta)) \| = s_1( \varepsilon(\beta) - \E(\varepsilon(\beta)) )$, we have
	\begin{align*}
		& \sup_{\beta \in \mathbf{B}} \frac{1}{\sqrt{NT}} \sum_{r=1}^{R_{NT}} s_r(\varepsilon(\beta) - \E(\varepsilon(\beta)) ) \\
		& \qquad \leq R_{NT} \frac{\| \varepsilon(\beta) - \E(\varepsilon(\beta)) \|}{\sqrt{NT}} 
		= O_p\left( \frac{R_{NT}}{\sqrt{\min(N,T)}} \right) = o_p(1).
	\end{align*}
\end{itemize}

\subsection{Estimator of number of factors with functional data}

Consider a generic factor model for functional data
\begin{align}\label{E:factor_model}
Y(\beta) = \lambda(\beta)f(\beta)'+ U(\beta),
\end{align}
where $\beta$ belongs to a separable metric space $(\Bbf,d_\Bbf)$, $Y(\beta)\in 
\R^{N\times T}$ is the observation matrix of functional outcomes $\beta \mapsto 
y_{it}(\beta)$, and $\lambda(\beta)\in \R^{N\times R(\beta)}$, $f(\beta)\in 
\R^{ T \times R(\beta)}$ such that for all $\beta\in \Bbf$ the probability 
limits of $\lambda(\beta)'\lambda(\beta)/N$ and $f(\beta)'f(\beta)/T$ exist and 
are positive definite deterministic matrices such that 
\begin{equation} 
	\liminf_{N,T}\sup_\beta 
	s_R\left(\frac{\lambda(\beta)f(\beta)'}{\sqrt{NT}}\right)  > 0. 
	\label{eq.ex.factor.1} 
\end{equation}
The object of interest is the maximal rank $R=\max_{\beta\in \Bbf}R(\beta)$.

To illustrate applicability of this model, suppose that the outcome variable is  intraday pollution levels $y_{it}(\beta)$, where $\beta$ is the time within a day, across counties $i$ and time $t$, as in \citet{aue2015prediction}. It is plausible to assume that counties with higher population density and dependence on automobiles will have higher average levels of pollution. At the same time, pollution patterns on weekdays and on weekends may differ in a systematic way. Hence it is reasonable to model the intraday pollution curve $y_{it}(\cdot)$ as the interaction of the county fixed effect $\lambda_i(\cdot)$ and the time effect $f_t(\cdot)$, plus independent noise, arriving at model \eqref{E:factor_model}. A related approach to modeling functional time series can be found in \citet{kargin2008curve}, whose empirical objective is to predict the contract rate curves of daily Eurodollar futures. 

Of course, arguments similar to those outlined above may be applied to modeling of numerous other functional quantities, from mortality as a function of age to crop yields as a function of spatial location. For more examples and an overview of functional data analysis, see e.g. \citet{wang2016functional} and \citet{kowal2019functional}.

Let us now show heuristically how to derive a consistent estimator of the maximal rank $R$.

Note that the model assumptions imply
\begin{align}
\sup_\beta s_i(\lambda(\beta)f(\beta)') &= O_p\left(\sqrt{NT}\right), \quad i\leq R,\\
\sup_\beta s_i(\lambda(\beta)f(\beta)') &= 0, \quad i> R.
\end{align}
If $U(\beta)$ satisfies the conditions of \autoref{Thm:main_result}, we have $\sup_\beta ||U(\beta)|| =O_p\left( \sqrt{\max(N,T)} + \gamma_2(\Bbf,d_\Bbf)\right)$ and so
\[
\sup_\beta ||U(\beta)|| = O_p\left(\sqrt{\max(N,T)}\right).
\]
Denote $s_i(A)$ the $i$-th largest singular value of matrix $A$. The Ky Fan inequality for singular values asserts that for $A,B \in \R^{N \times T}$
\[
|s_i(A+B)-s_i(A)| \leq s_1(B)=||B|| \text{ for all } i=1,\dots,\min(N,T).
\]

Using this inequality, for a fixed $\beta$ we obtain
\begin{align*}
s_R\left(\frac{Y(\beta)}{\sqrt{NT}}\right)= s_R\left(\frac{\lambda(\beta)f(\beta)'+U(\beta)}{\sqrt{NT}}\right) \geq s_R\left(\frac{\lambda(\beta)f(\beta)'}{\sqrt{NT}}\right) - \left|\left|\frac{U(\beta)}{\sqrt{NT}}\right|\right|.
\end{align*}
Therefore, there exists a positive constant $C>0$ such that
\begin{align*}
\sup_\beta s_R\left(\frac{Y(\beta)}{\sqrt{NT}}\right) &\geq \sup_\beta s_R\left(\frac{\lambda(\beta)f(\beta)'}{\sqrt{NT}}\right) - \sup_\beta \left|\left|\frac{U(\beta)}{\sqrt{NT}}\right|\right| \\
& \geq C - O_p \left(\frac{1}{\sqrt{\min(N,T)}}\right),
\end{align*}
where the last inequality holds by (\ref{eq.ex.factor.1}).

On the other hand,
\begin{align*}
\sup_\beta s_{R+1}\left(\frac{Y(\beta)}{\sqrt{NT}}\right) \leq \sup_\beta s_{R+1}\left(\frac{\lambda(\beta)f(\beta)'}{\sqrt{NT}}\right) + \sup_\beta s_{R+1}\left(\frac{U(\beta)}{\sqrt{NT}}\right) \leq O_p \left(\frac{1}{\sqrt{\min(N,T)}}\right).
\end{align*}
This establishes consistency of the following natural estimator of $R$,
\begin{align}
\hat{R} = \sum_{l=1}^{\min(N,T)} \mathbb{I}\left( \sup_\beta s_l\left( \frac{Y(\beta)}{\sqrt{NT}} \right) \geq \psi_{NT}\right), \label{E:max_rank_estimator}
\end{align}
where $\psi_{NT}$ is a sequence of real numbers satisfying $\psi_{NT}\to 0$ and $\psi_{NT} \sqrt{\min(N,T)} \to \infty$.

Empirical practice calls for an automatic procedure for choosing the tuning 
parameter $\psi_{NT}$. One may consider one of the following three options, 
using the penalty term from \citet{bai2002determining}:
\begin{align*}
\psi_{NT,1} &= \hat{\sigma} \sqrt{\frac{N+T}{NT} \log \frac{NT}{N+T}},\\
\psi_{NT,2} &= \hat{\sigma} \sqrt{\frac{N+T}{NT} \log \min(N,T)},\\
\psi_{NT,3} &= \hat{\sigma} \sqrt{ \frac{\log \min(N,T)}{\min(N,T)}},
\end{align*}
where $\hat{\sigma}^2=\sup_\beta \hat{\sigma}^2(\beta)$ is a consistent estimator of 
\[
\sigma^2=\sup_\beta \sigma^2(\beta) = \sup_\beta \frac{1}{NT}\sum_{i=1}^N \sum_{t=1}^T \E\left[ u_{it}(\beta)^2  \right].
\]
In applications, $\hat{\sigma}^2(\beta)$ can be replaced by the residual variance of $Y(\beta)$ after partialling out $k_{\max}$ factors using principle component analysis, where $k_{\max}$ is a pre-specified upper bound on the true maximal number of factors $R$.

\section{Monte Carlo illustration}\label{section.mc}
Here we illustrate the performance of the maximal rank estimator in the 
functional factor model described in the previous section with a simple 
simulation design.
	
The data generating process is the functional factor model \eqref{E:factor_model}, where, for simplicity, we let the loadings $\lambda(\beta)$ and the factors $f(\beta)$ to be independent of $\beta$. In scalar form, the model is
\begin{equation}
y_{it}(\beta)=\sum_{r=1}^{R(\beta)} \lambda_{ir} f_{tr} + u_{it}(\beta), \,\, i=1,\dots,N, \,t=1,\dots,T,
\end{equation}
where $\lambda_{ir}, f_{tr} \sim \text{iid} \, N(0,1)$ and 
\begin{equation*}
	u_{it}(\beta) = \frac{\sigma}{2} \left(\xi_{it1} \cos\beta + \xi_{it2} \sin\beta\right),\quad \xi_{it1}, \xi_{it2} \sim \text{iid} \, N(0,1).
\end{equation*}

The chosen specification for $u_{it}(\cdot)$ comes from a generic representation of any Gaussian stochastic process as an infinite trigonometric series, in which we only retain one term. Clearly, the error variance $\V(u_{it}(\beta))=\sigma$ for all $\beta$ and there is nontrivial dependence of $u_{it}(\beta)$ across values of $\beta$. We set $\sigma=1$. The results do not change substantially when larger values of $\sigma$ are used.

We choose the range of parameter $\beta$ to be $\Bbf = \{0,0.1,\dots,0.9,1\}$ and the corresponding ranks
\[
(R(0),R(0.1),\dots,R(0.9), R(1)) = (4,4,1,4,3,1,2,3,4,4,1),
\]
so that the true value of interest is $R=\max_\beta R(\beta)=4$.
	
	The simulated bias and root MSE for the maximal rank estimator \eqref{E:max_rank_estimator} are shown in \autoref{T:FFM}. Clearly, the choice $\psi_{NT}=\psi_{NT,3}$ for the tuning parameter leads to poor small sample performance, which is similar to the results of \citet{bai2002determining}. However, under the other two choices $\psi_{NT,1}, \psi_{NT,2}$, bias and RMSE are modest even in small samples and become essentially zero when $N,T \ge 50$.
	
	Given these simulation results, we are convinced that our generalization \eqref{E:max_rank_estimator} of the estimator of \citet{bai2002determining} will be useful for practitioners who are interested in estimating factor models with functional data.
	
	\begin{table}[]
\begin{center}
\begin{tabular}{lllllllllllll}
\multicolumn{1}{l}{}  &      & \multicolumn{3}{c}{\multirow{-1.5}{*}{$\psi_{NT,1}$}}                                  &  & \multicolumn{3}{c}{\multirow{-1.5}{*}{$\psi_{NT,2}$}}                                  &  & \multicolumn{3}{c}{\multirow{-1.5}{*}{$\psi_{NT,3}$}}                                  \\
\hline
$N$ & $T$    & 25                          & 50                          & 100                         &  & 25                          & 50                          & 100                         &  & 25                          & 50                          & 100                         \\
\hline
& \text{Bias} & 3.5 & 1.7 & 0.2 &  & 2.0 & 0.7 & 0.0 &  & 6.5 & 4.2 & 1.3 \\
\multirow{-2}{*}{25}  & \text{RMSE} & 0.6 & 0.6 &0.4 &  &0.6 & 0.6 & 0.1 &  & 0.5 & 0.6 & 0.6 \\
\hline
& \text{Bias} & 2.0 & 0.0 & 0.0 &  & 0.9 & 0.0 & 0.0 &  & 4.5 & 4.8 & 0.2 \\
\multirow{-2}{*}{50}  & \text{RMSE} & 0.6 & 0.1 & 0.0 &  & 0.6 & 0.0 & 0.0 &  & 0.6 & 0.6 & 0.4 \\
\hline
& \text{Bias} & 0.3 & 0.0 & 0.0 &  & 0.1 & 0.0 & 0.0 &  & 1.7 & 0.3 & 0.9 \\
\multirow{-2}{*}{100} & \text{RMSE} &0.5 & 0.0 & 0.0 &  & 0.2 & 0.0 & 0.0 &  & 0.6 & 0.5 & 0.5
\end{tabular}
\caption{Performance of the maximal rank estimator under different thresholds $\psi_{NT}$.}
\label{T:FFM}
\end{center}
\end{table}

\section{Conclusion}\label{section.conclusion}

In this paper, we derive a novel uniform stochastic bound on the operator norm of sub-Gaussian random matrices. We use it to establish consistency of a new estimator that minimizes the operator norm of the matrix of moment functions as well as to introduce an estimator of the maximal number of factors in a functional interactive fixed effects model.

\newpage

\appendix
\appendixpage

\begin{subappendices}

\section{Proof of \autoref{Thm:chaining}}\label{Appendix:proof_chaining}

The following proof can be found in \citet{vershynin2018high}, see Theorem 8.5.3.

Since $T$ is separable, we can assume for simplicity that it is finite. Let $(T_k)$ be an admissible sequence and $\pi_k(t)$ be the best approximation to $t$ in $T_k$, i.e.
\[
d(t,\pi_k(t)) = \min_{t'\in T_k} d(t,t').
\]

Now consider a chain of approximations to the point $t$ starting from some $t_0$
\[
t_0=\pi_0(t) \to \pi_1(t) \to \dots \to \pi_{\tilde{K}}(t)=t
\]
and write
\[
Z_t - Z_{t_0} = \sum_{k=1}^{\tilde{K}} \left(Z_{\pi_k(t)}- Z_{\pi_{k-1}(t)}\right).
\]
Sub-Gaussianity of increments $Z_{\pi_k(t)}- Z_{\pi_{k-1}(t)}$ implies that, for any $u> 0$,
\begin{align}
    \Prb\left(|Z_{\pi_k(t)}- Z_{\pi_{k-1}(t)} | \leq Cu 2^{k/2} d(\pi_k(t),\pi_{k-1}(t)) \right) \geq 1-2e^{-C^2 u^2 2^k/K^2} \geq 1-2e^{-8u^2 2^k}, \label{E:bnd1}
\end{align}
where $C \geq \sqrt{8K}$.

Note that since $\pi_k(t)\in T_k$, $\pi_{k-1}(t)\in T_{k-1}$, the number of possible pairs $(\pi_k(t),\pi_{k-1}(t))$ is $|T_k|\cdot |T_{k-1}| \leq |T_k|^2=2^{2^{k+1}}$. Applying the union bound to \eqref{E:bnd1} over $k\in \mathbb{N}$ and pairs $(\pi_k(t),\pi_{k-1}(t))$, we obtain
\begin{align}
\Prb\left(|Z_{\pi_k(t)}- Z_{\pi_{k-1}(t)} | \leq Cu 2^{k/2} d(\pi_k(t),\pi_{k-1}(t)) \text{ for all } t\in T, k\in \mathbb{N} \right) &\geq 1-\sum_{k=1}^\infty 2^{2^{k+1}}\cdot 2e^{-8u^2 2^k} \\
&\geq 1-2e^{-u^2}.
\end{align}
The event on the left-hand side implies
\begin{align*}
    |Z_t-Z_{t_0}| &\leq Cu \sum_{k=1}^\infty 2^{k/2} d(\pi_k(t),\pi_{k-1}(t)) \leq Cu \sum_{k=1}^\infty 2^{k/2} (d(\pi_k(t),t)+d(\pi_{k-1}(t),t)) \\
    &\leq \tilde{C}u \gamma_2(T,d)
\end{align*}
for a constant $\tilde{C}>0$. Taking supremum over $t\in T$ yields
\[
\sup_{t\in T} |Z_t-Z_{t_0}|\leq \tilde{C}u \gamma_2(T,d).
\]
Since this event holds with probability at least $1-2e^{-u^2}$, $\sup_{t\in T} |Z_t-Z_{t_0}|$ is a sub-Gaussian random variable with Orlicz norm bounded by $\tilde{C}\gamma_2(T,d)$. The conclusion then follows from \eqref{E:E|Y|<K} and the inequality
\[
\E\sup_{t\in T} Z_t = \E\sup_{t\in T} (Z_t-Z_{t_0}) \leq \E\sup_{t\in T} 
|Z_t-Z_{t_0}|. \,\, \blacksquare
\]

\section{Proof of \autoref{Lem:Talagrand_prod}}\label{Appendix:proof_Talagrand}

We give the proof for the case of $L=2$ metric spaces for simplicity. The case of arbitrary $L$ follows immediately by inspection.

Denote the two spaces by $(X,d_X)$ and $(Y,d_Y)$. Consider admissible sequences $(X_k)$ and $(Y_k)$ in $X$ and $Y$, respectively. To each such pair there corresponds a sequence $(\Tilde{T}_k)$ in $T=X\times Y$ of the form
\begin{align}
    \tilde{T}_k=
    \begin{cases}
    X_0\times Y_0, \quad k=0,\\
    X_{k-1}\times Y_{k-1}, \,\,k\geq 1. \label{E:T_tilde}
    \end{cases}
\end{align}
This sequence is admissible since $|\tilde{T}_0|=|\tilde{T}_1|=1$ and $|\tilde{T}_k|=|X_{k-1}||Y_{k-1}|\leq 2^{2^{k-1}}2^{2^{k-1}}=2^{2^k}$ for $k  \geq 2$.

Fix $(x,y)\in T$ and write
\[
\sum_{k \geq 0} 2^{k/2} d((x,y),\tilde{T}_k) = d_X(x,X_0) + \sum_{k\geq 1} 2^{k/2}d(x,X_{k-1}) + d_Y(y,Y_0) + \sum_{k\geq 1} 2^{k/2}d_Y(y,Y_{k-1}).
\]
The bound on the first two terms on the right-hand side is
\begin{align*}
    d_X(x,X_0) + \sum_{k\geq 1} 2^{k/2}d(x,X_{k-1}) &= (1+\sqrt{2})d_X(x,X_0) + \sqrt{2} \sum_{k \geq 1} 2^{k/2}d_X(x,X_k)\\
    &\leq (1+\sqrt{2}) \sum_{k \geq 0} 2^{k/2}d_X(x,X_k).
\end{align*}
Similarly, we have
\begin{align*}
    d_Y(y,Y_0) + \sum_{k\geq 1} 2^{k/2}d_Y(y,Y_{k-1}) \leq (1+\sqrt{2}) \sum_{k \geq 0} 2^{k/2}d_Y(y,Y_k).
\end{align*}
Adding the two inequalities and taking suprema yields
\begin{align*}
    \sup_{(x,y)} \sum_{k \geq 0} 2^{k/2} d((x,y),\tilde{T}_k) &\leq (1+\sqrt{2})\sup_{(x,y)} \left( \sum_{k \geq 0} 2^{k/2}d_X(x,X_k) + \sum_{k \geq 0} 2^{k/2}d_Y(y,Y_k) \right)\\
    &\leq (1+\sqrt{2})\left( \sup_x \sum_{k \geq 0} 2^{k/2}d_X(x,X_k) + \sup_y \sum_{k \geq 0} 2^{k/2}d_Y(y,Y_k) \right).
\end{align*}
Taking infima over admissible sequences $(\tilde{T}_k)$ (which are functions of admissible sequences $(X_k)$ and $(Y_k)$) yields
\begin{align*}
    \inf_{(\tilde{T}_k)} \sup_{(x,y)} \sum_{k \geq 0} 2^{k/2} d((x,y),\tilde{T}_k)&\leq (1+\sqrt{2})\left(\inf_{(X_k)}  \sup_x \sum_{k \geq 0} 2^{k/2}d_X(x,X_k) + \inf_{(X_k)} \sup_y \sum_{k \geq 0} 2^{k/2}d_Y(y,Y_k) \right)\\
    &= (1+\sqrt{2}) \cdot \left( \gamma_2(X,d_X) + \gamma_2(Y,d_Y)\right).
\end{align*}
Finally, note that $\gamma_2(T,d)$ is not larger than the left-hand side of the 
inequality above since the infimum in its definition is taken over all 
admissible sequences $(T_k)$, not only those that have the form 
$(\tilde{T}_k)$. $\blacksquare$
\end{subappendices}

\bibliographystyle{apalike}
\bibliography{references}

\end{document}